\def\BibTeX{{\rm B\kern-.05em{\sc i\kern-.025em b}\kern-.08em
		T\kern-.1667em\lower.7ex\hbox{E}\kern-.125emX}}
\newcommand{\bfy}{\mathbf{y}}
\newcommand{\bfx}{\mathbf{x}}
\newcommand{\bfu}{\mathbf{u}} 
\newcommand{\bfU}{\mathbf{U}}
\newcommand{\bfB}{\mathbf{B}}
\newcommand{\domega}{\Delta\boldsymbol{\omega}}
\newcommand{\mbf}[1]{\mathbf{#1}}
\newcommand{\what}{\widehat}
\newcommand{\wtilde}{\widetilde}
\DeclareMathOperator*{\argmin}{arg\,min}
\newcommand{\transpose}{\mathsf{T}}
\newcommand{\bfzeros}{\mathbf{0}}
\newcommand{\vertiii}[1]{{\left\vert\kern-0.25ex\left\vert\kern-0.25ex\left\vert #1 
		\right\vert\kern-0.25ex\right\vert\kern-0.25ex\right\vert}}
\newcommand{\bfA}{\mathbf{A}}
\newcommand{\bfG}{\mathbf{G}}
\newcommand{\bfC}{\mathbf{C}}
\newcommand{\norm}[1]{\left\lVert#1\right\rVert}
\newtheorem{theorem}{\bf \emph{Theorem}}
\newtheorem{proposition}[theorem]{Proposition}
\newtheorem{assumption}[theorem]{Assumption}
\begin{document}
	
	\title{Localizing Single and Multiple Oscillatory Sources: A Frequency Divider Approach \\
		\thanks{\color{black} This work is funded in part by National Science Foundation (NSF) grants OAC-1934766, ECCS-2246658, and ECCS-2145063, and the Power Systems Engineering Research Center (PSERC) grant S-100.}
	}
	
	\author{\IEEEauthorblockN{Rajasekhar Anguluri,~\IEEEmembership{Member,~IEEE}, and Anamitra Pal,~\IEEEmembership{Senior Member,~IEEE}}
		\IEEEauthorblockA{\textit{School of Electrical, Computer, and Energy
				Engineering} \\
			\textit{Arizona State University, Tempe, AZ 85281 USA}\\
			\{rangulur, anamitra.pal\}@asu.edu}}
	
	\maketitle
	
	\begin{abstract}
		Localizing sources of troublesome oscillations, particularly forced oscillations (FOs), 
  in power systems has received considerable attention over the last few years. This is driven in part by the massive deployment of phasor measurement units (PMUs) that
  capture these oscillations when they occur;
  and in part by the increasing incidents of FOs due to malfunctioning components, wind power fluctuations, and/or cyclic loads. Capitalizing on the 
  \emph{frequency divider} formula 
  of \cite{milano2016frequency}, we develop methods to localize single and multiple oscillatory sources using bus frequency measurements. The method to localize a single oscillation source does not require knowledge of network 
  parameters. However, the method for localizing FOs caused by multiple sources requires this knowledge.
  We explain the reasoning behind this knowledge difference as well as demonstrate the success of our methods
  for source localization in multiple test systems.
	\end{abstract}
	
	\begin{IEEEkeywords}
		Forced oscillations, Frequency divider, Phasor measurement unit, Sparsity, Total least squares. 
	\end{IEEEkeywords}
	
	\section{Introduction}
	\textcolor{black}{Low frequency oscillations can limit power transfer, cause damage to equipment, and lead to catastrophic blackouts \cite{gupta2020coordinated,gupta2021coordinated}. These oscillations are categorized as natural oscillations (NOs) and forced oscillations (FOs). The former occur due to power variations on long transmission lines, control actions from high-gain exciters, or control devices with poorly tuned settings. They can be suppressed by power system stabilizers (PSSs) or by regulating specific outputs of DC lines, flexible AC transmission system devices, and energy storage systems \cite{pal2013applying}.}
 {\color{black} 
Initiated by periodically exogenous disturbances, FOs can persist and propagate to the entire system if the driving source is not promptly removed \cite{meng2020time}. Common sources of FOs include malfunctioning PSSs, mechanical oscillations in the generator-turbine unit, or a load experiencing cyclic changes \cite{ghorbaniparvar2017survey}.
  Yet another emerging source of FOs is wind turbines. Here, FOs occur due to periodic power fluctuations caused by wind shear, tower shadow effects, or vibrations of floating offshore wind turbines \cite{su2013mitigation}. Thus, detecting and localizing FOs is crucial for the safe, reliable, and resilient operation of the power system.}

 Detection precedes localization and helps in distinguishing between NOs and FOs 
 (captured in the phasor measurement unit (PMU) data); whereas localization helps in mitigating the oscillations either by disconnecting the sources or by dampening the oscillation using counteracting signals. Detecting oscillations is generally easier and has been well-studied 
 \cite{HY-YL-PZ-ZD:17}. 
 {\color{black} Conversely,
 localizing FOs---especially with PMU data and with minimal knowledge of electro-mechanical dynamics---is a difficult problem, and the focus of this paper. 
	
	

In general, one could use voltages, currents, or frequencies of PMUs to localize FOs. However, we focus on the frequency data because 
the \emph{frequency divider} formula---advocated by the authors in \cite{milano2016frequency}---quite elegantly expresses the bus frequency variations as a linear superposition of the rotor speed deviations. 
Using this formula and without the explicit knowledge of electro-mechanical dynamics, this research put forth two methods to localize a single and multiple sources of FOs. 
A quick summary of our main contributions is as follows:
	\begin{enumerate}[itemsep=0.1cm]
			\item \emph{Single source}: 
   Demonstrating that magnitudes of the bus frequency variations at the source bus is greater than that of the non-source buses. 
   This allows to localize the single source from \textit{magnitudes alone} without knowledge of bus topology and line or reactance parameters. 
		    \item \emph{Multiple sources}: Through multiple simulations showcasing the failure of magnitude-based localization in this scenario. To identify sources, the total least squares (TLS) technique is employed, with the underlying model based on the frequency divider formula. Here, knowledge of bus topology and line or reactance parameters is necessary. 
            \item Illustrating the superior performance and limitations of the methods on a suite of benchmark systems: the WECC-9 bus, IEEE-14 bus, and IEEE 39-bus systems. 
	\end{enumerate}

{\emph{Key novelty}: In contrast to the existing methods (discussed below), our methods are source agnostic; they handle arbitrary disturbances from synchronous/converter-based resources. 
For multiple source locations, we account for errors in network parameters, which is new in the context of source localization.}
 
	

	
	
	
	
	\vspace{0.3mm}
	\textit{Related literature}: Because oscillations are easily analyzed in the complex plane, many methods rely on spectral properties of transfer functions between source inputs and outputs (see \cite{NZ-MG-SA:17, ML-SL-DG-DW:20, SR-WJ-NN-BL:21}). In \cite{estevez2022complex,estevez2022forced}, the authors leveraged the concept of dissipating energy between source and sink nodes to localize sources. Because the number of sources and harmonics in the oscillations are sparse, 
    \cite{SCC-PV-KT:18, anguluri2022complex, TH-NMF-PRK-LX:20}
    relied on $\ell_1$-regularized convex optimization algorithms to localize sources. 
    In addition, several data-driven and time-domain based methods were also proposed (see the works that cited \cite{TESTCASE}).
 
	

 Reference \cite{ortega2021source} also relied on the frequency divider formula for source localization; our work differ from the cited work on three fronts. First, the single-source localization method in this paper does not need bus topology. Second, \cite{ortega2021source} relies on the discrete Fourier transform (DFT), requiring the injected signals to have distinct harmonics. In addition, DFT methods suffer from the problem of  spectral leakage and thresholding (see \cite{anguluri2022complex,jafarpisheh2021robust}). Methods in this work are signal agnostic. Third, \cite{ortega2021source} only considers measurement noise whereas we also consider errors in the network parameters.}


	\section{Problem Setup and Preliminaries}\label{sec: prelims}

    Recall the frequency divider formula in \cite{milano2016frequency} that relates the change in rotor speed deviations of synchronous machines to the frequency variations at network buses. Formally, 
    \begin{align}\label{eq: frequency divider}
        \bfB_\text{BG}\domega_G(t)\approx-\bfB_\text{BB}\domega_B(t), \quad t\geq 0, 
    \end{align}
    where $\domega_B(t)\in \mathbb{R}^N$ is the vector of bus frequency variations; $\domega_G(t)\in \mathbb{R}^n$ is the vector of rotor speed deviations of synchronous machines; and the matrices are 
    \begin{itemize}
        \item $\bfB_\text{BB}=\bfB_\text{BUS}+\bfB_\text{GG} \in \mathbb{R}^{N\times N}$, 
        \item $\bfB_\text{BUS}\in \mathbb{R}^{N\times N}$ is the imaginary part of the complex-valued network admittance, 
        \item $\bfB_\text{BG}\in \mathbb{R}^{N\times n}$ encodes susceptances between the generators and the buses to which the generators are connected, 
        \item $\bfB_\text{GG}\in \mathbb{R}^{N\times N}$ is a diagonal matrix with $\bfB_\text{GG}(i,i)\ne 0$ if there is a synchronous machine connected to the $i$-th bus of the network, and its value is the inverse of the machine's internal reactance.
    \end{itemize}
    

The formula in \eqref{eq: frequency divider} is an approximation to the reality because it excludes line and internal reactance errors, and the errors in $\domega_B(t)$. Including these errors in \eqref{eq: frequency divider} gives us  
    \begin{align}\label{eq: frequency divider errors}
    (\bfB_\text{BG}\!+\!\Delta\bfB_\text{BG})\domega_G(t)\!=\!-(\bfB_\text{BB}\!+\!\Delta\bfB_\text{BB})(\domega_B(t)+\mathbf{e}(t)), 
    \end{align}
where $\Delta\bfB_\text{BB}=\Delta\bfB_\text{BUS}+\Delta\bfB_\text{GG} \in \mathbb{R}^{N\times N}$ is the cumulative error. 
$\Delta\bfB_\text{BG}$ and $\Delta\bfB_\text{BB}$ capture the variations in the network parameters, and lie within $\pm30\%$ of their database values \cite{kusic2004measurement}. The errors in the bus frequency measurement variations obtained from PMUs, $\mathbf{e}(t)$, usually lie within $\pm0.008~\mathrm{Hz}$ (for U.S. power grids) \cite{biswas2023open}.
The modeling errors in $\domega_G(t)$ are excluded as they are insignificant compared to the magnitudes of the rotor oscillations. 


\section{Localizing a single source}
This section introduces a practically feasible technique to localize oscillation produced by one generator (which can be either synchronous-based or converter-based).
We also assume no knowledge of network admittance and reactance parameter matrices.
Finally, for this initial study, we consider the errors in parameters and noise to be negligible (this will be relaxed in the next section).
In such a scenario, \eqref{eq: frequency divider} can be expressed with an equality sign.
Let $\mathbf{H}^+$ denote the Moore-Penrose pseudo-inverse of the matrix $\mathbf{H}$. Then, from \eqref{eq: frequency divider}, it follows that 
\begin{align}\label{eq: single-no-noise-model}
    \domega_B(t)=-\bfB_\text{BB}^+\bfB_\text{BG}\domega_G(t).
\end{align}

Several properties of $\bfB_\text{BB}^+\bfB_\text{BG}$ have already been studied in the literature \cite{milano2020frequency, milano2018model}. For example, this product matrix is dense (it has more non-zeros) even when matrices $\bfB_\text{BB}$ and $\bfB_\text{BG}$ are sparse. Our work points out another important property ignored in the earlier studies for transmission systems and only recently explored in distribution systems for fault identification \cite{sajan2020realistic}: \emph{The matrix $\bfB_\text{BB}^+$ is diagonally dominant} \cite{van2017pseudoinverse}. 
Note
that in a 
diagonally dominant matrix, the magnitude of the $i$-th diagonal entry is greater than or equal to the sum of the magnitudes of the entries in the $i$-th row or column.

{\color{black} The matrix $\bfB_\text{BB}^+$ is not only diagonally dominant but every entry is non-negative \cite[Chapter 2]{horn2012matrix}. On the other hand, the $i$-th column of $\bfB_\text{BG}$ is the vector of zeros except at the $i$-th entry, where a generator is connected to the $i$-th network bus. With these facts in place, we state our first result.}

 

\begin{proposition}\label{prop: single source} Let the (synchronous/converter-based) generator
connected to the $i$-th network bus be inducing FOs. Then, $|\domega_{\mathbf{B},i}(t)|\geq |\domega_{\mathbf{B},j}(t)|$, for $i\ne j \in \{1,\ldots, N\}$, where $\domega_{\mathbf{B},i}(t)$ is the $i$-th component of $\domega_\mathbf{B}(t)$.   
\end{proposition}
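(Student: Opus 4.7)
The plan is to combine the two structural facts highlighted just before the statement: (i) every column of $\bfB_\text{BG}$ has a single non-zero entry, located in the row corresponding to the bus where that generator is attached; and (ii) $\bfB_\text{BB}^+$ is entrywise non-negative and diagonally dominant.

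First I would encode the single-source hypothesis as $\domega_G(t) = \omega_s(t)\, e_s$, where $e_s$ is the standard basis vector in $\mathbb{R}^n$ singling out the source generator (the one attached to bus $i$). By fact (i), this yields $\bfB_\text{BG}\domega_G(t) = \bfB_\text{BG}(i,s)\,\omega_s(t)\, e_i$ in $\mathbb{R}^N$, a vector supported only at coordinate $i$ and with $\bfB_\text{BG}(i,s)\neq 0$ by construction. Substituting into \eqref{eq: single-no-noise-model} gives, for each $k \in \{1,\dots,N\}$,
\begin{equation*}
\domega_{\mathbf{B},k}(t) \;=\; -\,\bfB_\text{BG}(i,s)\,\omega_s(t)\,\bfB_\text{BB}^+(k,i),
\end{equation*}
so the bus frequency variations differ across $k$ only through the column entry $\bfB_\text{BB}^+(k,i)$.

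Next I would take magnitudes and invoke fact (ii). Non-negativity gives $|\domega_{\mathbf{B},k}(t)| = |\bfB_\text{BG}(i,s)\,\omega_s(t)|\cdot \bfB_\text{BB}^+(k,i)$, while column-diagonal dominance yields
\begin{equation*}
\bfB_\text{BB}^+(i,i) \;\geq\; \sum_{k \neq i} \bfB_\text{BB}^+(k,i) \;\geq\; \bfB_\text{BB}^+(j,i)
\end{equation*}
for every $j \neq i$, where the last step again uses non-negativity. Multiplying by the common factor $|\bfB_\text{BG}(i,s)\,\omega_s(t)|$ delivers $|\domega_{\mathbf{B},i}(t)| \geq |\domega_{\mathbf{B},j}(t)|$, which is the claim.

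The step I expect to be delicate is justifying the reduction $\domega_G(t) = \omega_s(t)\, e_s$ from the stated hypothesis. Physically, when only the generator at bus $i$ forces oscillations, the remaining synchronous machines still respond through the network, so their rotor deviations need not vanish. The proposition therefore seems to be read in an idealized sense in which only the source contributes to $\domega_G(t)$ (or in which the source's contribution dominates); without this reading, $\bfB_\text{BB}^+\bfB_\text{BG}\domega_G(t)$ would combine several generator columns and the comparison in fact (ii) would no longer force the maximum to land at coordinate $i$. Once that reduction is accepted, the argument is a one-line application of the non-negativity and diagonal dominance of $\bfB_\text{BB}^+$.
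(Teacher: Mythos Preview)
Your proposal is correct and follows essentially the same route as the paper: reduce $\domega_G(t)$ to a single non-zero component, observe that $\domega_B(t)$ is then proportional to a single column of $\bfB_\text{BB}^+$, and invoke diagonal dominance (with non-negativity) of that column. The paper makes precisely the reduction you flag as delicate---it writes $\domega_G(t)=[\domega_{\bfG,1}(t),0,\ldots,0]^\transpose$ without further justification---so your caveat is well taken but does not separate your argument from the original.
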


\begin{proof}
Without loss of generality assume that the $n$ generators
be connected to the buses $1,\ldots,n$ of the network, and set $i=1$. Then, the expression in \eqref{eq: single-no-noise-model} can be expanded as
\begin{align*}
    \begin{bmatrix}
        \domega_{\bfB,1}(t)\\
        \domega_{\bfB,2}(t)\\
        \vdots
        \\
        \domega_{\bfB,N}(t)
    \end{bmatrix}&=\bfB^+\underbrace{\begin{bmatrix}
        b_{11} & 0 & \ldots & 0\\
        0 & b_{22} & \dots  & 0\\
        \vdots & \vdots & & \vdots\\
        0 & 0 & \ldots & 0
    \end{bmatrix}}_{\bfB_{\bfB\bfG}}    \begin{bmatrix}
        \domega_{\mathbf{G},1}(t)\\
        0\\
        \vdots
        \\
        0
    \end{bmatrix}_{n\times 1}, 
\end{align*}
where $b_{ii}>0$, for all $i\in \{1,\ldots,n\}$. Observe that $\bfB_{\bfB\bfG}$ is not a diagonal matrix. It has only $n$ columns with the structure described above. By matrix multiplication, we have
\begin{align*}
    \begin{bmatrix}
        \domega_{\bfB,1}(t)\\
        \domega_{\bfB,2}(t)\\
        \vdots
        \\
        \domega_{\bfB,N}(t)
    \end{bmatrix}&=\begin{bmatrix}
        \bfB^+(1,1)\\
        \bfB^+(2,1)\\
        \vdots
        \\
        \bfB^+(N,1)
    \end{bmatrix}b_{11}\domega_{\bfG,1}(t). 
\end{align*}
Hence,
\begin{align*}
    |\domega_{\bfB,1}(t)|&=|\bfB^+(1,1)||b_{11}\domega_{\bfG,1}(t)|\\
    &\geq |\bfB^+(j,1)||b_{11}\domega_{\bfG,1}(t)|=|\domega_{\bfB,j}(t)|,
\end{align*}
for all $N\geq j\geq 1$. The inequality follows because $\bfB$ is both row- and column-wise diagonally dominant. 
\end{proof}

We explain the utility of this result to localize a single FO source. Let every network bus have a PMU installed and the oscillations causing generator be at the $i$-th bus. Then, the $i$-th PMU's frequency measurements have greater magnitude than those of other PMUs. This result is intuitive because the energy (and hence, the magnitude) of the oscillation dissipates as it traverses across the network. 
Fig.~\ref{fig:figure_label1} illustrates Proposition \ref{prop: single source} on WECC-9 bus system with three generators and IEEE 14-bus system with five generators. 

Fig.~\ref{fig:figure_label1} shows the bus frequency variations at select buses for one source (bus 1). These simulations bear out the predictions discussed above. 
{\color{black} We also infer that localizing a single source based on magnitudes is easier on a larger system such as the IEEE 14-bus system compared to a smaller system such as the WECC 9-bus system.}
This is because the diagonal entry of the nodal admittance matrix $\mathbf{B}_\text{BUS}$ is the negative of the algebraic sum of the off-diagonal entries. So, for a larger system, the diagonal term (of both $\mathbf{B}_\text{BUS}$ and $\mathbf{B}_\text{BB}^+$) has higher magnitude than any of the individual off-diagonal terms.

{\color{black} Fig.~\ref{fig:figure_label2} illustrates the bus frequency variations at select buses for two sources. For the IEEE 14-bus system, the sources are at 1 and 8. Magnitudes of variations at the source buses are not uniformly (across time) higher than those at the non-source buses. This simulation shows that the prediction in Proposition \ref{prop: single source} might not hold when there is more than one source bus. 
This is because each bus frequency variation is given by a weighted sum of the rotor speed variations of the source buses. Here, the weights are the coefficients of $\mathbf{B}^+_\text{BB}$. This weighted sum for the multiple sources case distorts the magnitude-dominant property described in Proposition \ref{prop: single source}. This fact holds regardless of the oscillation frequencies in $\domega_G(t)$.}


\begin{figure}[!t]
  \centering
  \subfigure[Bus frequency variations in WECC 9-bus system. The single oscillatory source is at Bus 1, which has the highest magnitude of frequency variations.]{\includegraphics[width=1.0\linewidth]{./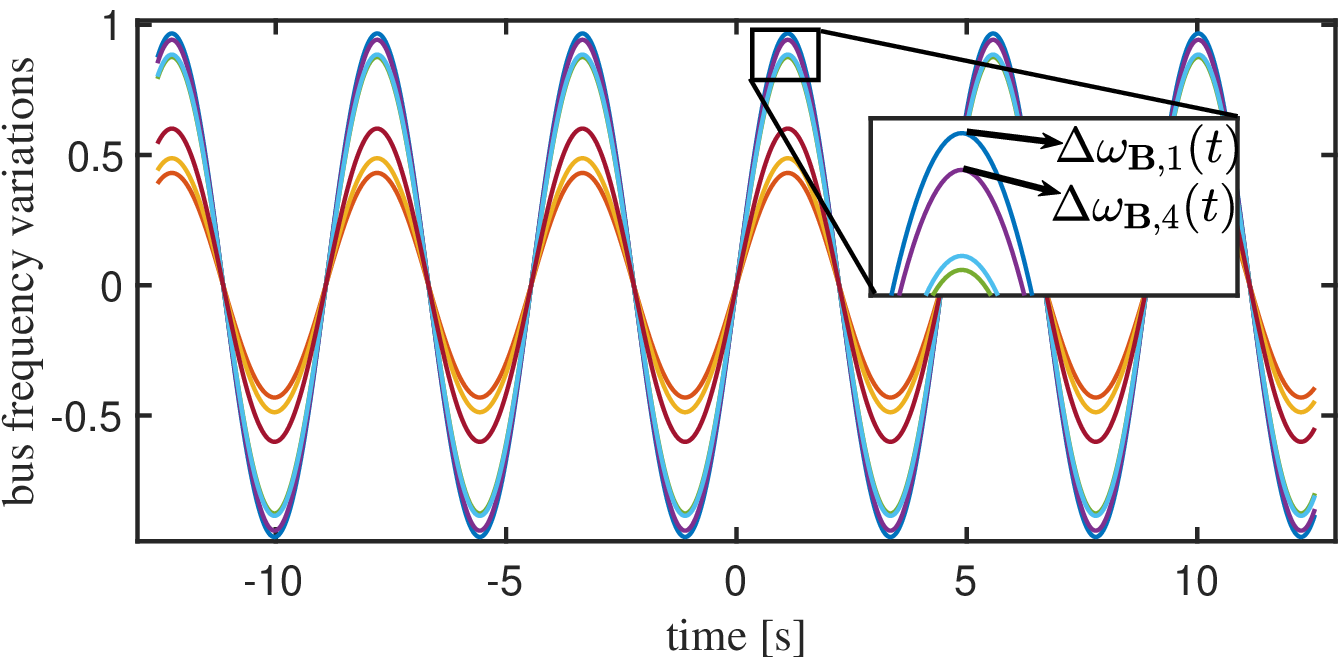}%
  \label{fig:subfigure11}}
  \hfil
  \subfigure[Bus frequency variations in IEEE 14-bus system. The single oscillatory source is at Bus 1, which has the highest magnitude of frequency variations.]{\includegraphics[width=1.0\linewidth]{./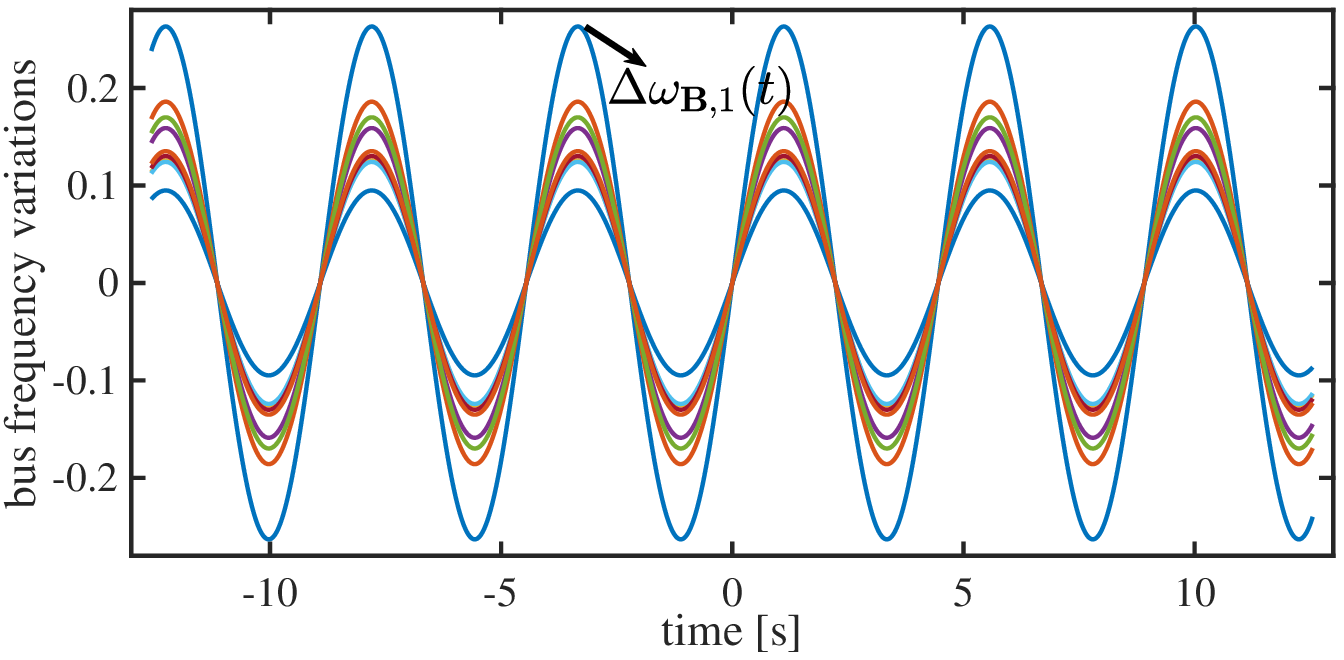}%
  \label{fig:subfigure12}}
  \caption{\small Simulations supporting the result in Proposition \ref{prop: single source}.}
  \label{fig:figure_label1}
\end{figure}

\begin{figure}[!t]
  \centering
  \subfigure[Bus frequency variations in WECC 9-bus system. Magnitudes of the bus frequencies at the sources buses (1 and 3) are not dominant at all times.]{\includegraphics[width=1.0\linewidth]{./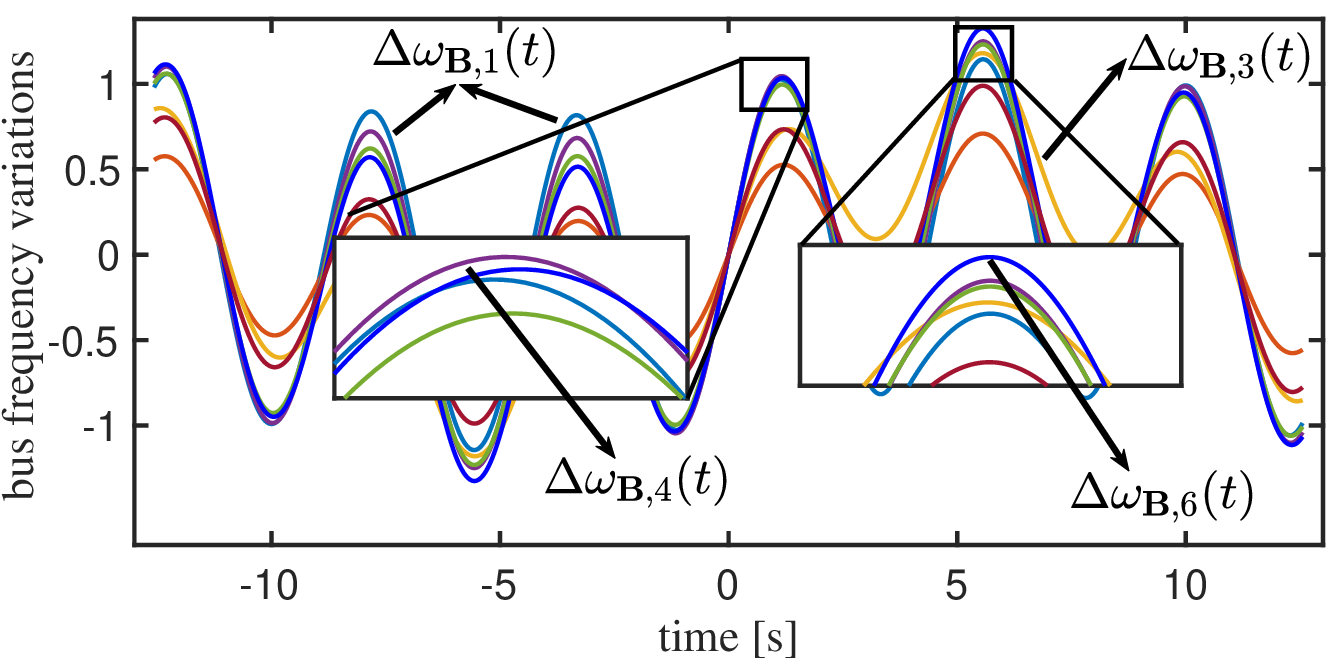}%
  \label{fig:subfigure21}}
  \hfil
  \subfigure[Bus frequency variations in IEEE 14-bus system. Magnitudes of the bus frequencies at source bus 1 is not dominant at all times.]{\includegraphics[width=1.0\linewidth]{./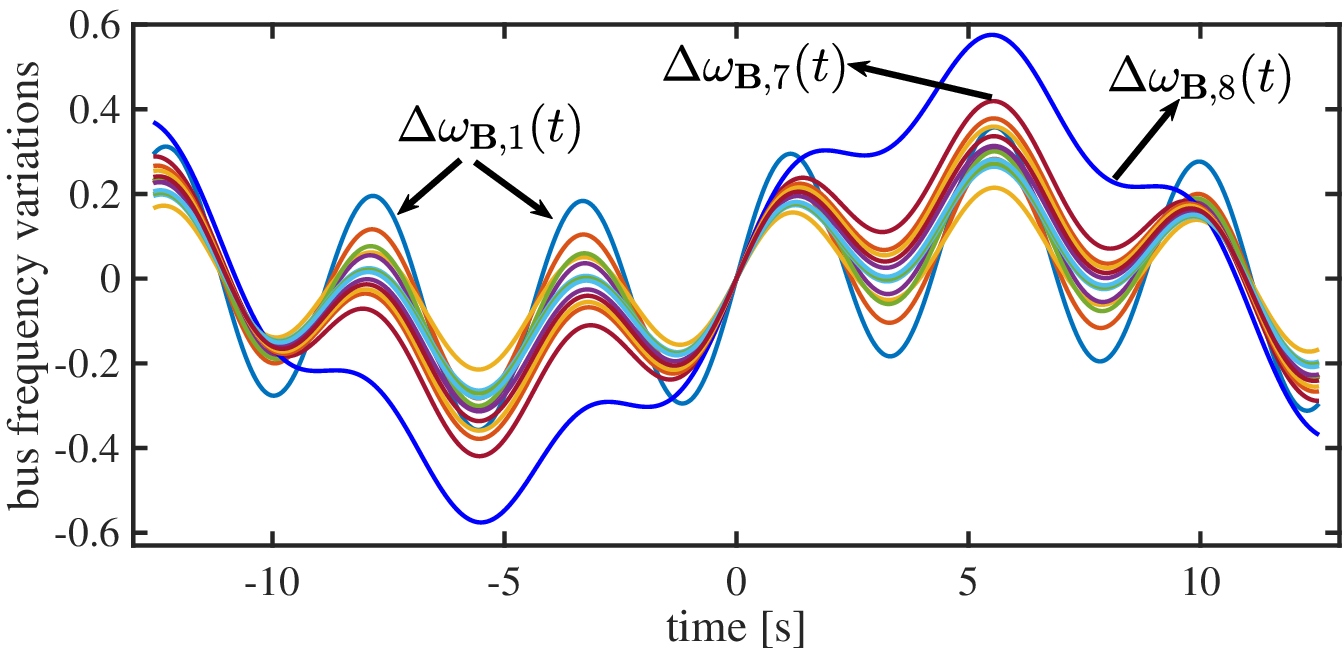}%
  \label{fig:subfigure22}}
 \caption{\small Simulations not supporting the result in Proposition \ref{prop: single source}.}
  \label{fig:figure_label2}
\end{figure}

\section{Localizing multiple sources}\label{sec: multiple sources}
Having discussed the difficulties relying on frequency magnitudes for localizing multiple sources in the previous section, we now discuss the ability of the TLS technique to localize oscillation sources with single or multiple harmonics.
Unlike the setting in the previous section, this setting assumes the knowledge of network admittance matrix and internal reactance parameters. Accounting for errors in these parameters and noise in PMU measurements, rewrite \eqref{eq: frequency divider errors} as 
    \begin{align}\label{eq: frequency divider errors1}
    (\bfB_\text{BB}\!+\!\Delta\bfB_\text{BB})^+(\bfB_\text{BG}\!+\!\Delta\bfB_\text{BG})\domega_G(t)\!=\!-(\domega_B(t)+\mathbf{e}(t)).
    \end{align}
Define the error term $\Delta_e=(\bfB_\text{BB}\!+\!\Delta\bfB_\text{BB})^+-\bfB_\text{BB}^+$. Then, the matrix multiplying $\domega_G(t)$ can be expanded as 
\begin{align*}
(\bfB_\text{BB}\!+\!\Delta\bfB_\text{BB})^+(\bfB_\text{BG}\!+\!\Delta\bfB_\text{BG})&\!=\!(\Delta_e+\bfB_\text{BB}^+)(\bfB_\text{BG}\!+\!\Delta\bfB_\text{BG})\\
&=\bfB_\text{BB}^+\bfB_\text{BG}+\Delta_\text{error}, 
\end{align*}
where $\Delta_\text{error}$ equals the remaining terms in the matrix product. Plugging the identity above in \eqref{eq: frequency divider errors1} yields us 
\begin{align}\label{eq: linear model with errors}
(\bfB_\text{BB}^+\bfB_\text{BG}+\Delta_\text{error})\domega_G(t)=-(\domega_B(t)+\mathbf{e}(t)). 
\end{align}

In \eqref{eq: linear model with errors}, the unknown variables are $\domega_{\bfG}(t)$ and errors $\Delta_\text{error}$ and $\mathbf{e}(t)$. In general, $\mathbf{e}(t)$ is uncorrelated with $\Delta_\text{error}$ because the former is due to measurement uncertainty and the latter is the cumulative errors in the network parameters. 
As such, we estimate $\domega_{\bfG}(t)$ using simple TLS \cite{markovsky2007overview}.
This gives us
\begin{align}
    \widehat{\Delta}\mathbf{\omega}_\bfG(t)=-\mathbf{v}_{pq}v_{qq}^{-1}, 
\end{align}
where the vector $\mathbf{v}_{pq}$ consists of first $n$ elements of the $(n+1)$-th column of the matrix of right singular vectors, $\mathbf{V}$, of the augumented matrix $[\mathbf{\bfB_\text{BB}^+\bfB_\text{BG}}\,|\, -\domega_B(t)]$; and $v_{qq}$ is the $n+1$-th element of the $(n+1)$-th column of $\mathbf{V}$. 

{\color{black} In essence, we estimated rotor speed variations from erroneous bus frequencies and network parameters using the TLS technique. This method could also be used to infer rotor speeds (not just the oscillatory deviations), extending the framework in \cite{milano2018model}. Returning to our localization problem, the components of $\widehat{\Delta}\mathbf{\omega}_\bfG(t)$ will be close to zero except for the entries that correspond to the oscillatory sources.} 


\section{Simulation results}
This section presents simulation results on the IEEE 14-bus and 39-bus systems that showcase the utility of the TLS technique (see Section \ref{sec: multiple sources}) in localizing multiple sources. The topology and branch data (to compute $\bfB_\text{BUS}$), and the internal reactance (to compute $\bfB_\mathbf{BG}$ and $\bfB_\mathbf{GG}$) are given in \cite{milano2020frequency}. We compute the non-zero entries of $\bfB_\mathbf{GG}$ by taking the average of $q$- and $d$- axis reactances. For the test systems, we assume that each network bus has a PMU installed; and model the PMU data noise and the line parameter errors as zero mean Gaussian random variable\footnote{\color{black} Gaussian assumption is a worst case scenario because, as discussed earlier, the errors in practice are only bounded.} with variances $0.02$ and $0.3$, respectively. 

\smallskip 
\noindent \emph{\textbf{IEEE 14-bus}}: The five generator units labeled as $\{1,2,3,4,5\}$ are connected to the buses $\{1,2,3,6,8\}$, respectively. {\color{black} In this 
study, the rotor speed deviations are set manually\footnote{\color{black} A more realistic approach is to conduct a dynamic simulation study in which controllers/exciters 
inject forced oscillations into the power system, causing the machine rotor angles to oscillate. 
We defer such a detailed study to our future work; see also Section \ref{sec: conclusion}.}: that is, we model}  $\domega_{\bfG,1}(t)=\sin(1.41t)+\sin(0.2t)$ and $\domega_{\bfG,5}(t)=a(t)\sin(1.13t)$, where $a(t)$ is linear in $t$ (see Fig.~\ref{fig:figure_label3}), and $\domega_{\bfG,j}(t)=0$, for $j\in \{2,3,4\}$. 

Fig.~\ref{fig:subfigure41} shows the bus frequency measurements for select buses. From this plot alone, it is impossible to determine how many sources are present, let alone localizing them. Even if one were to guess the number of sources (based on oscillation frequencies), localizing all of them based on magnitudes is not possible. For instance, identifying the source at bus 8 (in green) is highly unlikely.
Fig.~\ref{fig:subfigure42} shows the TLS estimated rotor speed variations. The TLS method correctly identified the generators inducing the FOs and the waveform.

\smallskip 
\noindent \emph{\textbf{IEEE 39-bus}}: The ten generators labeled $\{1,2,\ldots,10\}$ are connected to the buses $\{30,31,\ldots,39\}$. The speed variations are same as in the case of IEEE 14-bus system. In this system, the sources are at buses $30$ and $34$. Fig.~\ref{fig:subfigure51} shows the bus frequency measurements for select buses. While one could reasonably argue that a source could be at bus $30$, locating the other source would be impossible. Fig.~\ref{fig:subfigure52} shows the estimated rotor speed variations by the TLS technique, which accurately identified the two sources.

\smallskip 
{\color{black} \noindent \emph{\textbf{Omissions and justification}}: One valid criticism against our proposed methods is that their performance is not compared with the existing approaches.
Regarding this, please note that our
first goal was to exploit the properties of \emph{frequency divider} formula to enable source localization, obviating the need for detailed machine models. The second was to account for the parameter and measurement errors in the divider formula. Simulations in this section are a testimony to both our goals, and comparisons with existing methods, while useful, is cursory. Hence, we did not include them.}

\begin{figure}
    \centering
    \includegraphics[width=1.0\linewidth]{./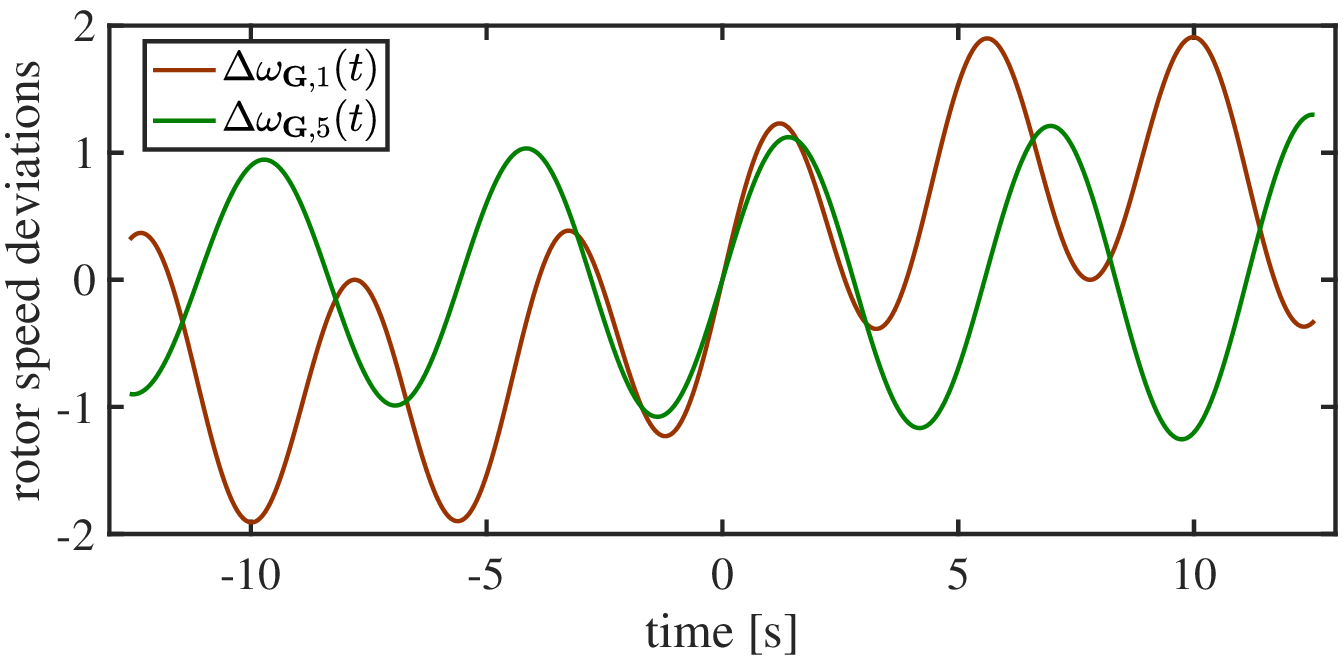}
    \caption{\small Oscillations manifested in the rotor speeds of generators.}
    \label{fig:figure_label3}
\end{figure}

\begin{figure}[!t]
  \centering
  \subfigure[Bus frequency variations at selected network buses.]{\includegraphics[width=1.0\linewidth]{./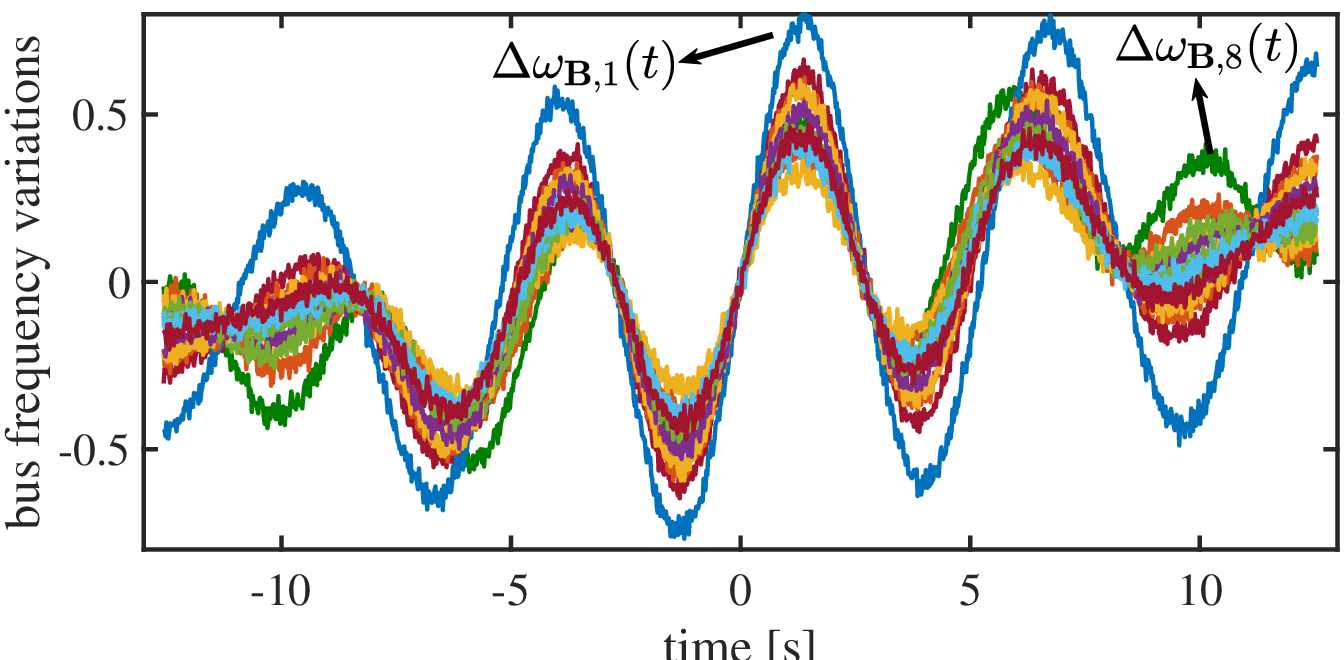}%
  \label{fig:subfigure41}}
  \hfil
  \subfigure[TLS estimates of rotor speed variations.]{\includegraphics[width=1.0\linewidth]{./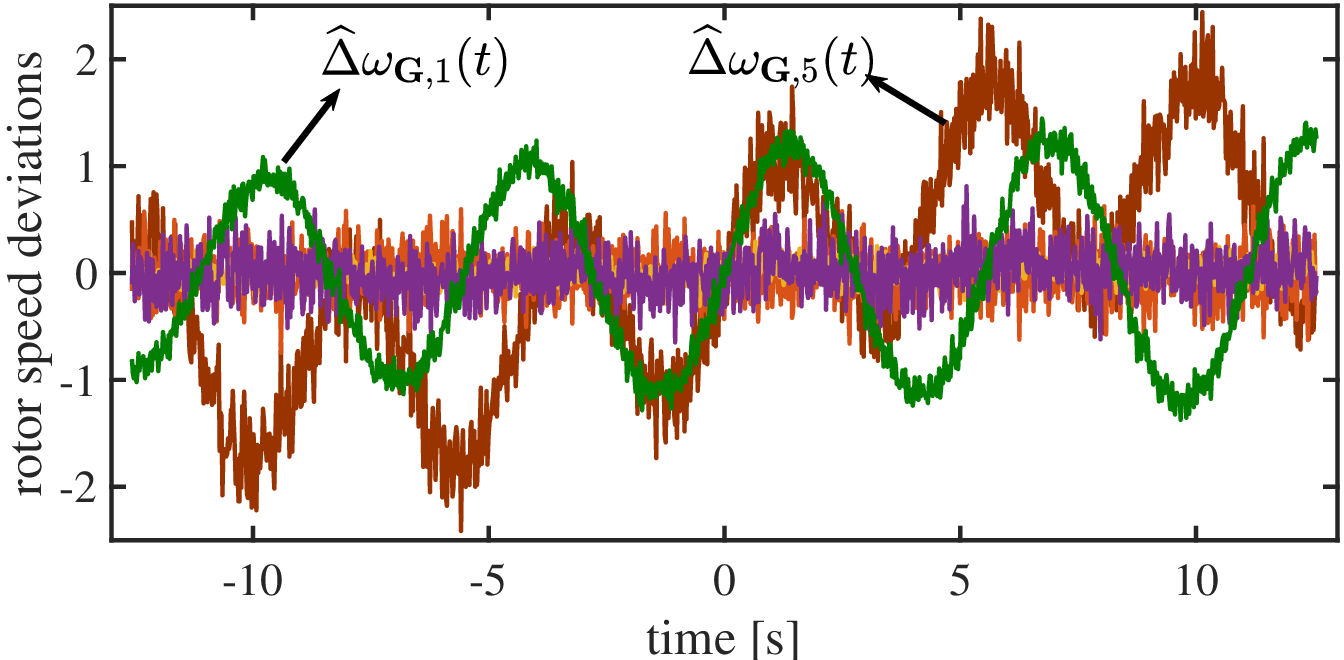}%
  \label{fig:subfigure42}}
  \caption{\small Localizing sources via TLS: IEEE-14 bus system.}
  \label{fig:figure_label4}
\end{figure}

\begin{figure}[!t]
  \centering
  \subfigure[Bus frequency variations at selected network buses.]{\includegraphics[width=1.0\linewidth]{./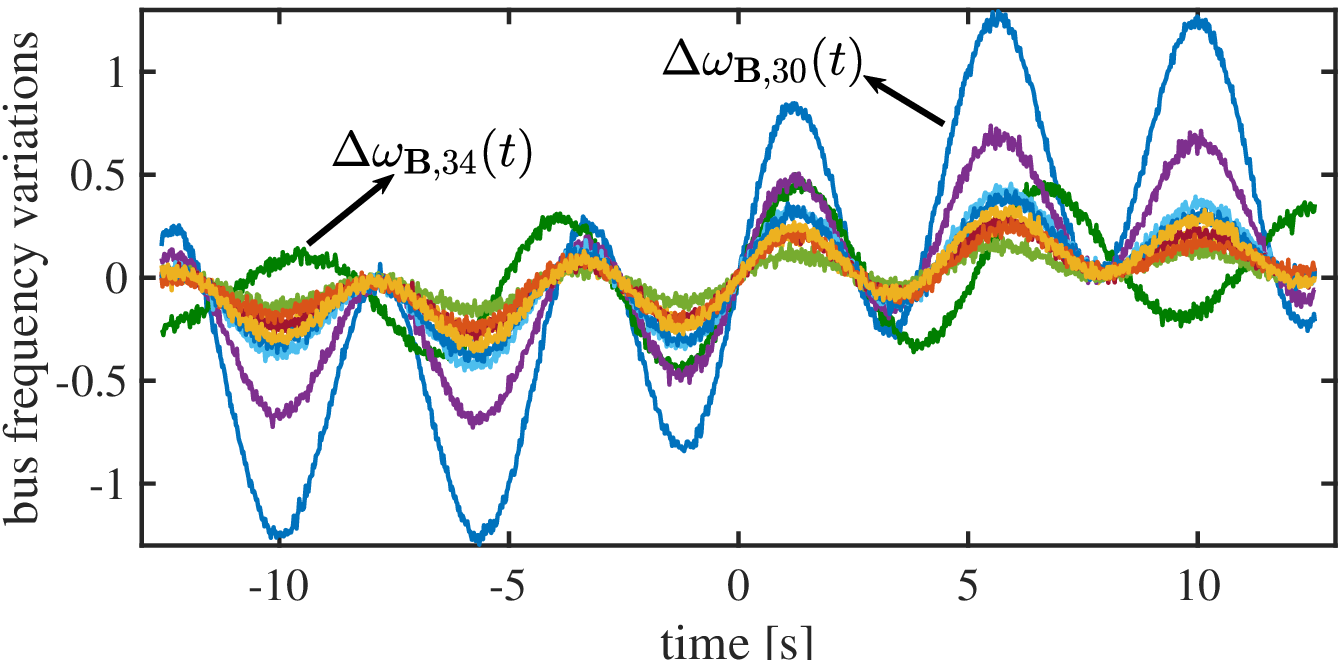}%
  \label{fig:subfigure51}}
  \hfil
  \subfigure[TLS estimates of rotor speed variations.]{\includegraphics[width=1.0\linewidth]{./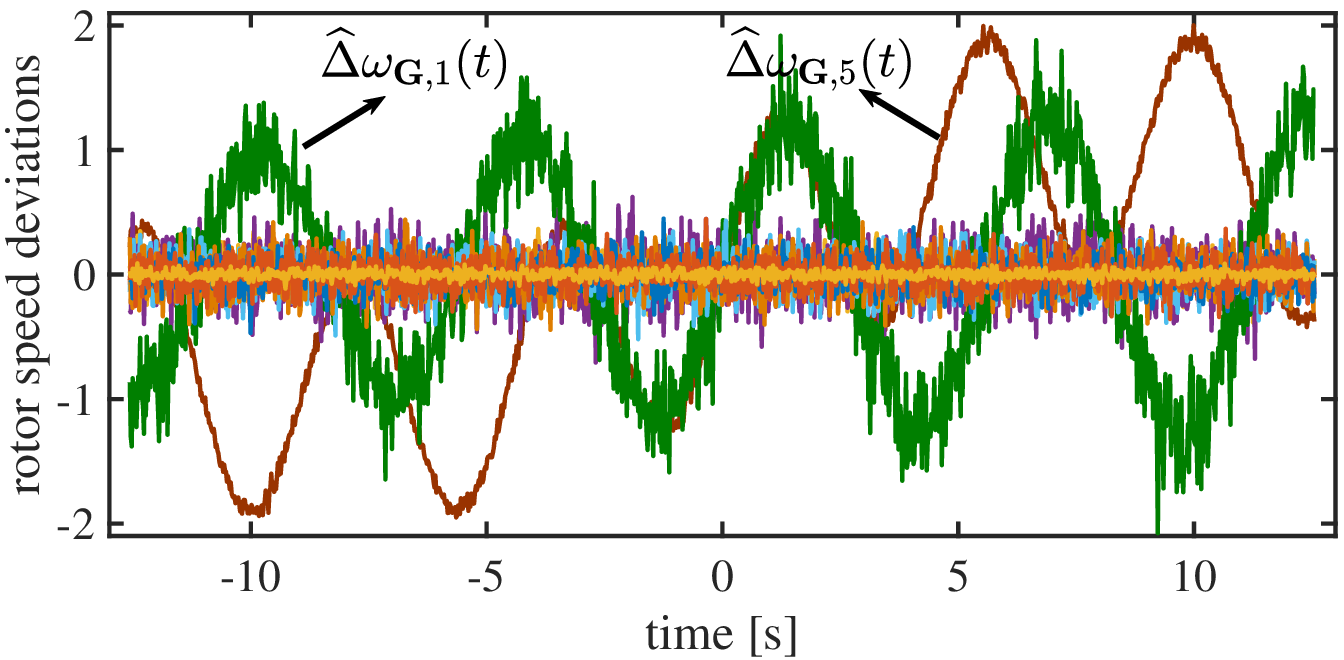}%
  \label{fig:subfigure52}}
  \caption{\small Localizing sources via TLS: IEEE-39 bus system.}
  \label{fig:figure_label5}
\end{figure}

\section{Conclusion}\label{sec: conclusion}
We presented two practical methods for localizing single and multiple oscillatory sources. Both methods are generic, and {\color{black} 
exploit 
certain
matrix-theoretic properties of the \emph{frequency divider} formula.} Thus, they could be applied to localize any unwanted disturbances, oscillatory or otherwise, and could be applied irrespective of the type of the FO source, synchronous-based or converter-based. The first method concerning single FO source relied on the magnitudes of bus frequencies of PMUs and was extremely fast.  The second method concerning multiple FO source locations relied on the TLS technique. 
The computational burden of the second method comes from computing the inverse of the admittance matrices and performing the SVD operation for the TLS. Both these operations could be sped up by utilizing the sparsity of the matrices.


{\color{black} Our future work will focus on three aspects: (i) develop strategies to localize sources without requiring PMUs to be placed on all the network buses; 
(ii) develop a localization method based on \eqref{eq: frequency divider errors}, thereby obviating the need for computing the pseudo-inverse; (iii) compare and contrast our methods with existing methods on larger systems, including the WECC 179-bus system and the renewable-rich WECC 240-bus system \cite{yuan2020developing}, and the real-world data sets provided in 
\cite{maslennikov2022creation}.}

	\bibliographystyle{IEEEtran}
	\bibliography{BIB.bib}

\begin{thebibliography}{10}
\providecommand{\url}[1]{#1}
\csname url@samestyle\endcsname
\providecommand{\newblock}{\relax}
\providecommand{\bibinfo}[2]{#2}
\providecommand{\BIBentrySTDinterwordspacing}{\spaceskip=0pt\relax}
\providecommand{\BIBentryALTinterwordstretchfactor}{4}
\providecommand{\BIBentryALTinterwordspacing}{\spaceskip=\fontdimen2\font plus
\BIBentryALTinterwordstretchfactor\fontdimen3\font minus
  \fontdimen4\font\relax}
\providecommand{\BIBforeignlanguage}[2]{{%
\expandafter\ifx\csname l@#1\endcsname\relax
\typeout{** WARNING: IEEEtran.bst: No hyphenation pattern has been}%
\typeout{** loaded for the language `#1'. Using the pattern for}%
\typeout{** the default language instead.}%
\else
\language=\csname l@#1\endcsname
\fi
#2}}
\providecommand{\BIBdecl}{\relax}
\BIBdecl

\bibitem{milano2016frequency}
F.~Milano and A.~Ortega, ``Frequency divider,'' \emph{IEEE Transactions on
  Power Systems}, vol.~32, no.~2, pp. 1493--1501, 2016.

\bibitem{gupta2020coordinated}
P.~Gupta, A.~Pal, and V.~Vittal, ``Coordinated wide-area control of multiple
  controllers in a power system embedded with {HVDC} lines,'' \emph{IEEE
  Transactions on Power Systems}, vol.~36, no.~1, pp. 648--658, 2020.

\bibitem{gupta2021coordinated}
------, ``Coordinated wide-area damping control using deep neural networks and
  reinforcement learning,'' \emph{IEEE Transactions on Power Systems}, vol.~37,
  no.~1, pp. 365--376, 2021.

\bibitem{pal2013applying}
A.~Pal, J.~S. Thorp, S.~S. Veda, and V.~Centeno, ``Applying a robust control
  technique to damp low frequency oscillations in the {WECC},''
  \emph{International Journal of Electrical Power \& Energy Systems}, vol.~44,
  no.~1, pp. 638--645, 2013.

\bibitem{meng2020time}
Y.~Meng, Z.~Yu, N.~Lu, and D.~Shi, ``Time series classification for locating
  forced oscillation sources,'' \emph{IEEE Transactions on Smart Grid},
  vol.~12, no.~2, pp. 1712--1721, 2020.

\bibitem{ghorbaniparvar2017survey}
M.~Ghorbaniparvar, ``Survey on forced oscillations in power system,''
  \emph{Journal of Modern Power Systems and Clean Energy}, vol.~5, no.~5, pp.
  671--682, 2017.

\bibitem{su2013mitigation}
C.~Su, W.~Hu, Z.~Chen, and Y.~Hu, ``Mitigation of power system oscillation
  caused by wind power fluctuation,'' \emph{IET Renewable Power Generation},
  vol.~7, no.~6, pp. 639--651, 2013.

\bibitem{HY-YL-PZ-ZD:17}
H.~Ye \emph{et~al.}, ``Analysis and detection of forced oscillation in power
  system,'' \emph{IEEE Trans. Power Syst.}, vol.~32, no.~2, pp. 1149--1160,
  2017.

\bibitem{NZ-MG-SA:17}
N.~Zhou, M.~Ghorbaniparvar, and S.~Akhlaghi, ``Locating sources of forced
  oscillations using transfer functions,'' in \emph{2017 IEEE Power and Energy
  Conference at Illinois (PECI)}, 2017, pp. 1--8.

\bibitem{ML-SL-DG-DW:20}
M.~Luan, S.~Li, D.~Gan, and D.~Wu, ``Frequency domain approaches to locate
  forced oscillation source to control device,'' \emph{International Journal of
  Electrical Power \& Energy Systems}, vol. 117, p. 105704, 2020.

\bibitem{SR-WJ-NN-BL:21}
S.~Roy, W.~Ju, N.~Nayak, and B.~Lesieutre, ``Localizing power-grid forced
  oscillations based on harmonic analysis of synchrophasor data,'' in
  \emph{55th Annual Conference on Information Sci. and Sys.}, 2021, pp. 1--5.

\bibitem{estevez2022complex}
P.~G. Estevez, P.~Marchi, C.~Galarza, and M.~Elizondo, ``Complex dissipating
  energy flow method for forced oscillation source location,'' \emph{IEEE
  Transactions on Power Systems}, vol.~37, no.~5, pp. 4141--4144, 2022.

\bibitem{estevez2022forced}
P.~G. Estevez, P.~Marchi, F.~Messina, and C.~Galarza, ``Forced oscillation
  identification and filtering from multi-channel time-frequency
  representation,'' \emph{IEEE Transactions on Power Systems}, vol.~38, no.~2,
  pp. 1257--1269, 2022.

\bibitem{SCC-PV-KT:18}
S.~C. Chevalier, P.~Vorobev, and K.~Turitsyn, ``Using effective generator
  impedance for forced oscillation source location,'' \emph{IEEE Transactions
  on Power Systems}, vol.~33, no.~6, pp. 6264--6277, 2018.

\bibitem{anguluri2022complex}
R.~Anguluri, N.~Taghipourbazargani, O.~Kosut, and L.~Sankar, ``A
  complex-{LASSO} approach for localizing forced oscillations in power
  systems,'' in \emph{2022 IEEE Power \& Energy Society General Meeting
  (PESGM)}.\hskip 1em plus 0.5em minus 0.4em\relax IEEE, 2022, pp. 01--05.

\bibitem{TH-NMF-PRK-LX:20}
T.~Huang, N.~M. Freris, P.~R. Kumar, and L.~Xie, ``A synchrophasor data-driven
  method for forced oscillation localization under resonance conditions,''
  \emph{IEEE Transactions on Power Systems}, vol.~35, no.~5, pp. 3927--3939,
  2020.

\bibitem{TESTCASE}
S.~Maslennikov \emph{et~al.}, ``A test cases library for methods locating the
  sources of sustained oscillations,'' in \emph{2016 IEEE Power and Energy
  Society General Meeting (PESGM)}, 2016, pp. 1--5.

\bibitem{ortega2021source}
A.~Ortega and F.~Milano, ``Source location of forced oscillations based on bus
  frequency measurements,'' in \emph{2021 IEEE 30th International Symposium on
  Industrial Electronics (ISIE)}.\hskip 1em plus 0.5em minus 0.4em\relax IEEE,
  2021, pp. 01--06.

\bibitem{jafarpisheh2021robust}
B.~Jafarpisheh and A.~Pal, ``A robust algorithm for real-time phasor and
  frequency estimation under diverse system conditions,'' \emph{Energies},
  vol.~14, no.~21, p. 7112, 2021.

\bibitem{kusic2004measurement}
G.~Kusic and D.~Garrison, ``Measurement of transmission line parameters from
  {SCADA} data,'' in \emph{IEEE PES Power Systems Conference and Exposition,
  2004.}\hskip 1em plus 0.5em minus 0.4em\relax IEEE, 2004, pp. 440--445.

\bibitem{biswas2023open}
S.~Biswas, J.~Follum, P.~Etingov, X.~Fan, and T.~Yin, ``An open-source library
  of phasor measurement unit data capturing real bulk power systems behavior,''
  \emph{IEEE Access}, 2023.

\bibitem{milano2020frequency}
F.~Milano and {\'A}.~O. Manjavacas, \emph{Frequency Variations in Power
  Systems: Modeling, State Estimation, and Control}.\hskip 1em plus 0.5em minus
  0.4em\relax John Wiley \& Sons, 2020.

\bibitem{milano2018model}
F.~Milano, A.~Ortega, and A.~J. Conejo, ``Model-agnostic linear estimation of
  generator rotor speeds based on phasor measurement units,'' \emph{IEEE
  Transactions on Power Systems}, vol.~33, no.~6, pp. 7258--7268, 2018.

\bibitem{sajan2020realistic}
K.~Sajan, M.~Bariya, S.~Basak, A.~Srivastava, A.~Dubey, A.~von Meier, and
  G.~Biswas, ``Realistic synchrophasor data generation for anomaly detection
  and event classification,'' in \emph{8th Workshop on Modeling and Simulation
  of Cyber-Physical Energy Systems}.\hskip 1em plus 0.5em minus 0.4em\relax
  IEEE, 2020, pp. 1--6.

\bibitem{van2017pseudoinverse}
P.~Van~Mieghem, K.~Devriendt, and H.~Cetinay, ``Pseudoinverse of the
  {L}aplacian and best spreader node in a network,'' \emph{Physical Review E},
  vol.~96, no.~3, p. 032311, 2017.

\bibitem{horn2012matrix}
R.~A. Horn and C.~R. Johnson, \emph{Matrix analysis}.\hskip 1em plus 0.5em
  minus 0.4em\relax Cambridge university press, 2012.

\bibitem{markovsky2007overview}
I.~Markovsky and S.~Van~Huffel, ``Overview of total least-squares methods,''
  \emph{Signal processing}, vol.~87, no.~10, pp. 2283--2302, 2007.

\bibitem{yuan2020developing}
H.~Yuan, R.~S. Biswas, J.~Tan, and Y.~Zhang, ``Developing a reduced 240-bus
  {WECC} dynamic model for frequency response study of high renewable
  integration,'' in \emph{2020 IEEE/PES transmission and distribution
  conference and exposition (T\&D)}.\hskip 1em plus 0.5em minus 0.4em\relax
  IEEE, 2020, pp. 1--5.

\bibitem{maslennikov2022creation}
S.~Maslennikov and B.~Wang, ``Creation of simulated test cases for the
  oscillation source location contest,'' in \emph{2022 IEEE Power \& Energy
  Society General Meeting (PESGM)}.\hskip 1em plus 0.5em minus 0.4em\relax
  IEEE, 2022, pp. 1--1.

\end{thebibliography}

\end{document}